\newtheorem{Lem}{Lemma}
\begin{document}

\title{LAURA: LLM-Assisted UAV Routing for AoI Minimization }




\author{Bisheng Wei, Ruichen Zhang, Ruihong Jiang, Mugen Peng, \IEEEmembership{Fellow, IEEE}, Dusit Niyato, \IEEEmembership{Fellow, IEEE}

\thanks{This work was supported in part by NSFC under Grant no. 62301077.}

\thanks{B. Wei, R. Jiang and M. Peng are with the State Key Laboratory of Networking and Switching Technology, Beijing University of Posts and Telecommunications, Beijing 100876, China, (e-mail: weibs@bupt.edu.cn, rhjiang@bupt.edu.cn, pmg@bupt.edu.cn)}

\thanks{R. Zhang, and D. Niyato are with the College of Computing and Data Science, Nanyang Technological University, Singapore (e-mail: ruichen.zhang@ntu.edu.sg, dniyato@ntu.edu.sg).}

}

\maketitle

\begin{abstract}

With the rapid growth of the low-altitude economy, demand for real-time data collection via UAV-assisted wireless sensor networks (WSNs) is increasing. This paper studies the problem of minimizing the age of information (AoI) in UAV-assisted WSNs by optimizing the UAV flight routing. We formulate the AoI minimization task and propose a large language model (LLM)-assisted UAV routing algorithm (LAURA). LAURA employs an LLM as the intelligent crossover operators within an evolutionary optimization framework to efficiently explore the solution space. Simulation results show that LAURA outperforms benchmark methods in reducing the maximum AoI, especially in scenarios with many sensor nodes.

\end{abstract}

\begin{IEEEkeywords}
Low-altitude economy, UAV routing, LLM, AoI minimization.
\end{IEEEkeywords}

\section{Introduction}

The rapid development of the low-altitude economy is transforming various sectors, including urban logistics, infrastructure monitoring, and environmental surveillance \cite{background_1}. Unmanned aerial vehicles (UAVs) have emerged as a critical technology, playing a pivotal role in supporting wireless sensor networks (WSNs) for real-time data collection from spatially distributed ground sensor nodes (SNs). 
A crucial challenge in such systems is maintaining data freshness, especially for time-sensitive applications such as industrial IoT and precision agriculture. To address this, the age of information (AoI) has been proposed as a key metric that quantifies the time elapsed between data generation at a sensor and its reception at the data center \cite{Aoi}. Unlike traditional metrics such as latency or throughput, AoI reflects how current the received data is, directly impacting its decision-making value. Since the UAV's flight routing determines the visiting sequence of SNs, it plays a critical role in AoI performance: different routings result in different transmission orders and time delays, ultimately affecting the freshness of the collected data.

Various approaches have been proposed to address AoI-aware UAV routing planning. For instance, in \cite{AoI_dp}, the UAV routing was optimized using dynamic programming and ant colony heuristic algorithms to minimize the average AoI of data collected from all ground SNs. In \cite{Aoi_GA}, the authors proposed a method that first performed SN association based on affinity propagation clustering, and then used dynamic programming or a genetic algorithm to optimize the UAV routing. 
In \cite{AoI_Transformer}, the authors exploited the transformer to design a machine learning algorithm that optimized the UAV visiting sequence of ground clusters, effectively minimizing the total AoI in IoT networks. While effective, these approaches often involve high computational complexity and limited generalization, particularly when dealing with dynamic or large-scale SN deployments.

Fortunately, large language models (LLMs) have demonstrated impressive capabilities in understanding, generating, and reasoning with natural language \cite{LLM_Ability}. Their embedded knowledge and reasoning ability have attracted increasing interest in their integration into optimization frameworks for solving complex tasks. For instance, \cite{LLM_cloud} introduced a novel edge inference framework for LLMs, incorporating batching and model quantization techniques to ensure high-throughput inference on resource-limited edge devices. \cite{LLM_sat} employed an LLM agent to formulate and solve transmission strategies via mixture-of-experts approach in satellite communication systems. However, these works have not addressed UAV routing optimization, particularly in the context of AoI-driven WSNs.

To fill this gap, we propose an LLM-assisted UAV routing algorithm (LAURA) to minimize AoI. Unlike existing methods such as LEDMA \cite{ISAC_LLM}, which rely on normalization strategies suited for continuous variables, LAURA is capable of handling both continuous and discrete optimization, 
making it well-suited for UAV routing tasks. Our main contributions are as follows: Firstly, we formulate the AoI minimization problem and propose LAURA to solve it for UAV-assisted WSNs. Secondly, we leverage an LLM as intelligent crossover operators within an evolutionary optimization framework, enabling efficient solution exploration without handcrafted heuristics. Thirdly, by inheriting the generalization capabilities of LLMs, LAURA is able to adapt to varying network topologies and SN distributions. Simulation results validate that LAURA significantly outperforms baseline methods, especially in large-scale scenarios with a growing number of SNs.

\section{System Model}
\subsection{Network Model}

\begin{figure}[t!]
\centering
        \includegraphics[width=0.485\textwidth]{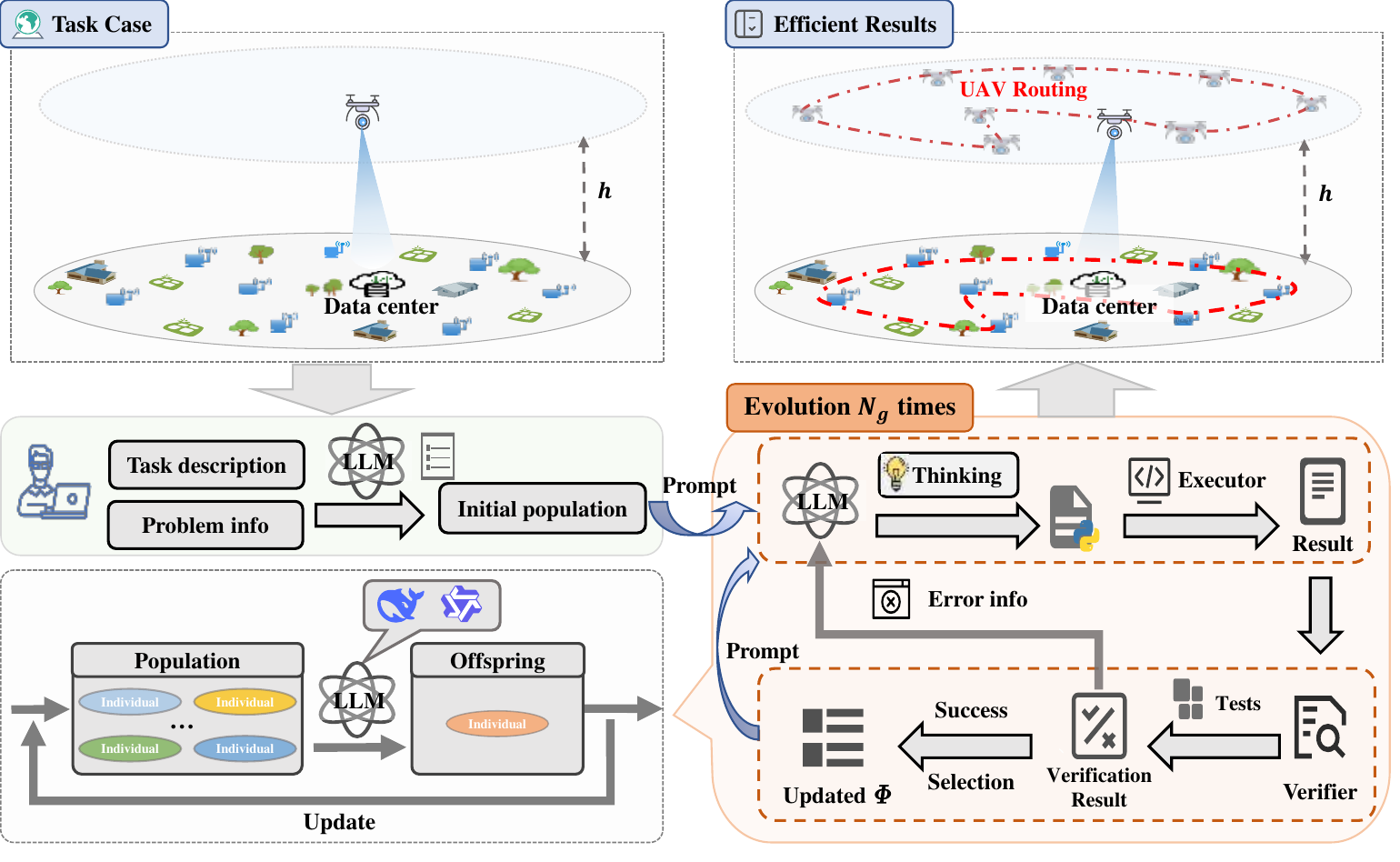}
        \caption{System model with UAV, a data center, and $N$ SNs, using LAURA to optimize UAV routing via LLM and evolutionary mechanisms.}
        \label{scenario}
\end{figure}

We consider a UAV-assisted WSN system, consisting of a data center $s_0$, a UAV, and $N$ ground SNs denoted by $\boldsymbol{\mathcal{S}}=\{s_1,s_2,\dots,s_N\}$. Each SN $s_i$, including the date center $s_0$, is stationary and positioned at coordinates $(x_i,y_i,0)$ in a three-dimensional Cartesian space. The UAV serves as a mobile data aggregator, tasked with collecting data from the SNs. For traceability, we assume that the UAV operates at a fixed altitude $h$ and maintains a constant speed $v$. The flight time $t_{i,j}$ required for the UAV to travel between any two SNs $s_i$ and $s_j$ is then given by
\begin{flalign}
t_{i,j} &= \tfrac{d_{i,j}}{v} =  \tfrac{1}{v} || s_{i} - s_{j} ||   \notag \\ 
  & =  \tfrac{1}{v} \sqrt{(x_{i} - x_{j})^2 + (y_{i} - y_{j})^2}, \quad \forall i \neq j.
\end{flalign}
The UAV initiates its mission at the data center $s_0$, sequentially visits each SN just once to collect data, and ultimately returns to $s_0$ for data offloading and further analysis. The complete UAV flight routing sequence $\boldsymbol{\mathcal{C}}$ is represented as
\begin{flalign}
\boldsymbol{\mathcal{C}}=[c_0,c_1,c_2,\dots,c_N,c_{N+1}],
\end{flalign}
where $c_i, \forall i \in \{1,2,\dots,N\}$ denotes the $i$-th visited SN in this sequence, with $c_0$ and $c_{N+1}$ representing the UAV's start and end points, respectively, both located at $s_0$.

\subsection{Data Transmission}
Upon reaching an SN $s_i,\forall i \in \{1, 2, \dots, N\}$, the UAV establishes a communication link and begins collecting data. Since the time required for link establishment is considered negligible compared to the data transmission and UAV flight time, it is omitted from the analysis, as in \cite{AoI_Scenario}. We also assume a line-of-sight (LoS) dominated wireless channel between the UAV and each SN. The channel power gain is modeled as $\frac{g_i}{h ^2}$, where $g_i$ denotes the channel gain at the reference distance. Under a fixed transmission power $P_0$, the achievable data rate from SN $s_i$ to the UAV is given by
\begin{flalign}
   R_i = W \log_{2} \left( 1 + \frac{P_0 g_i}{\sigma^2 h^2} \right),
\end{flalign} 
where $W$ is the system bandwidth and $\sigma ^2$ is the received noise power. Let $D_i$ be the size (in bits) of the data stored at $s_i$. Then, the duration required to fully transmit $D_i$ to the UAV, assuming a continuous and uninterrupted link, is
\begin{flalign}
  \tau_i = \frac{D_i}{W \log_{2} \left( 1 + \frac{P_0 g_i}{\sigma^2 h^2} \right)} = \frac{D_i}{R_i},
\end{flalign}
which characterizes the data collection latency from SN $s_i$ under a constant-rate and interference-free channel assumption.


\subsection{Age-of-Information (AoI) Metric}
The AoI measures the freshness of the data, which depends on both the data transmission time and the UAV’s subsequent flight time. Let $A_i(t)$ denote the AoI for the data transmitted from SN $s_i$ at time $t$, which is defined as
\begin{flalign}
  A_i(t)=(t-t_i)u(t-t_i), i=1, \dots, N,
\end{flalign}
where $t_i$ is the time at which $s_i$ generates and uploads its data. $u(x)=\max \{x,0 \}$ is the Heaviside step function, ensuring that AoI is non-negative.

Furthermore, let $t_{\textrm{end}}$ be the time required for the UAV to complete its entire routing. For the $i$-th SN in the flight routing sequence $\boldsymbol{\mathcal{C}}$, the corresponding AoI can be calculated by
\begin{flalign}
  A_{c_i}=A_{c_i}(t_{\textrm{end}})=\sum\nolimits_{j={i}}^{N}\tau_{c_j}+\sum\nolimits_{j=i}^{N}t_{c_j,c_{j+1}} ,
\end{flalign}
where $\sum_{j={i}}^{N}\tau_{c_j}$ is the cumulative data transmission time for the SNs from $c_i$ onwards in the routing sequence, and $\sum_{j=i}^{N}t_{c_j,c_{j+1}}$ represents the overall flight time required by the UAV to complete the data collection from $c_i$, visit all remaining SNs in the sequence, and finally return to the data center $s_0$. 
The UAV’s energy consumption is not explicitly modeled under the assumption of sufficient onboard energy or availability of recharging, and is thus omitted here.


\section{Problem Formulation}
To maintain data freshness, we aim to design a UAV flight routing sequence, i.e., $\boldsymbol{\mathcal{C}}$, that minimizes the maximum AoI across all SNs. By incorporating both data transmission durations and UAV travel time, the proposed formulation seeks to determine an efficient routing strategy for timely data collection. Thus, the problem is mathematically formulated as\begin{subequations} \label{problem_1}
\begin{flalign}
    \textbf{P}_{1}: \min \limits_{\boldsymbol{\mathcal{C}}} \,\, & \max_{i}  \,\, A_{c_i}   \notag \\
    \mathrm{s.t.} \,\, 
    & \tau_{c_k} > 0, \forall k, \label{p1_c1}\\
    & t_{c_k,c_{k+1}} > 0 , \forall k , \label{p1_c2}\\
    &  c_{0} = c_{N+1} = s_0 , \label{p1_c3} \\ 
    &c_i \neq c_j, \forall i \neq j , \label{p1_c4}  
\end{flalign}
\end{subequations}
where $t_{c_i,c_j}= \frac{d_{c_i,c_j}}{v}$ denotes the UAV travel time between SNs, which is based on their Euclidean distance. Constraints \eqref{p1_c1} and \eqref{p1_c2} respectively represent the UAV's data collection duration and travel time, which should be greater than zero. Constraint \eqref{p1_c3} ensures that the UAV flight starts and ends at the data center $s_0$, and constraint \eqref{p1_c4} guarantees that each SN is visited only once.

Given the objective of minimizing the maximum AoI across all SNs, it is essential to find an efficient flight routing $\boldsymbol{\mathcal{C}}$ that ensures timely data collection. This problem is inherently combinatorial, as the UAV's visit order directly affects AoI minimization. To provide further insight into this problem, we introduce Lemma \ref{lem_1}, which establishes a fundamental property of AoI in UAV routing.

\begin{Lem} \label{lem_1}
For any flight routing, the AoI of SNs decreases monotonically with the data collection order $\{c_1,c_2,\dots,c_N\}$.
\end{Lem}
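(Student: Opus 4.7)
The plan is to prove the lemma directly from the closed-form expression of $A_{c_i}$ given earlier in the excerpt, by comparing consecutive terms $A_{c_i}$ and $A_{c_{i+1}}$ and showing that the difference is strictly positive. This reduces the claim to a simple telescoping argument rather than requiring any structural property of the routing itself.

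First, I would fix an arbitrary flight routing $\boldsymbol{\mathcal{C}} = [c_0, c_1, \dots, c_N, c_{N+1}]$ and recall the expression
\begin{flalign*}
A_{c_i} = \sum_{j=i}^{N} \tau_{c_j} + \sum_{j=i}^{N} t_{c_j, c_{j+1}}, \qquad i \in \{1, \dots, N\}.
\end{flalign*}
Next, for any $i \in \{1, \dots, N-1\}$, I would subtract $A_{c_{i+1}}$ from $A_{c_i}$, observing that each summation telescopes so that only the $j = i$ terms remain, yielding
\begin{flalign*}
A_{c_i} - A_{c_{i+1}} = \tau_{c_i} + t_{c_i, c_{i+1}}.
\end{flalign*}

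Then I would invoke constraints \eqref{p1_c1} and \eqref{p1_c2}, which assert $\tau_{c_i} > 0$ and $t_{c_i, c_{i+1}} > 0$ for every $i$. Consequently $A_{c_i} - A_{c_{i+1}} > 0$, i.e., $A_{c_i} > A_{c_{i+1}}$, so the sequence $\{A_{c_1}, A_{c_2}, \dots, A_{c_N}\}$ is strictly decreasing in $i$. Since this holds for an arbitrary routing, the claim follows.

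There is no real obstacle here; the only thing to be careful about is confirming that the two summations in the definition of $A_{c_i}$ both start at index $j = i$ (so that the telescoping leaves exactly $\tau_{c_i}$ and $t_{c_i, c_{i+1}}$) and explicitly citing constraints \eqref{p1_c1} and \eqref{p1_c2} for the strict positivity. The proof should therefore be short and entirely algebraic.
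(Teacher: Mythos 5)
Your proposal is correct and follows essentially the same route as the paper: both subtract the closed-form expressions for the AoI of two routing positions and observe that the leftover terms $\tau_{c_k} + t_{c_k,c_{k+1}}$ are strictly positive (the paper compares general indices $i<j$, leaving a sum over $k=i,\dots,j-1$, while you compare consecutive indices, which is equivalent by transitivity). If anything, your explicit citation of constraints \eqref{p1_c1} and \eqref{p1_c2} for strict positivity is slightly tidier than the paper's wording, which states the terms are $\geq 0$ before concluding strict positivity.
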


\begin{proof}

Since the AoI of each SN increases with its waiting time, we aim to show that if $i < j$, then the SN visited earlier, $c_i$, has a higher AoI than $c_j$, i.e.,
\begin{flalign}
    A_{c_i} - A_{c_j} > 0.
\end{flalign}
To quantify it, consider the AoI difference between $c_i$ and $c_j$:
\begin{flalign}
A_{c_i} - A_{c_j} & = \left( \sum\nolimits_{k=i}^{N} \tau_{c_k} + \sum\nolimits_{k=i}^{N} t_{c_k,c_{k+1}} \right) \notag \\
& - \left( \sum\nolimits_{k=j}^{N} \tau_{c_k} + \sum\nolimits_{k=j}^{N} t_{c_k,c_{k+1}} \right) \notag \\
& = \sum\nolimits_{k=i}^{j-1} \left( \tau_{c_k} + t_{c_k,c_{k+1}} \right).
\end{flalign}
where $\tau_{c_k} \geq 0$ and $t_{c_k, c_{k+1}} \geq 0$ for all $k$. Hence, the summation is strictly positive for $i < j$, and we conclude that $A_{c_i} > A_{c_j}, i < j$. The proof is complete.


\end{proof}


According to Lemma \ref{lem_1}, the AoI decreases monotonically as the UAV visits the SNs sequentially. Hence, the objective of problem $\textbf{P}_1$ can be replaced by $A_{c_1}$ equivalently, where
$A_{c_1}=\ \sum\nolimits_{k=1}^{N} \tau_{c_k} + \sum\nolimits_{k=1}^{N} t_{c_k,c_{k+1}}.$ Since $\sum\nolimits_{k=1}^{N} \tau_{c_k}$ is constant, the problem $\textbf{P}_1$ is equivalent to a minimum-route problem, i.e., 
\begin{subequations}
\begin{flalign}
    \textbf{P}_2: \min \limits_{\boldsymbol{\mathcal{C}}}  \, & \sum\nolimits_{k=1}^{N} t_{c_k, c_{k+1}}\notag \\
    \mathrm{s.t.} \,\,\, & \eqref{p1_c2}, \eqref{p1_c3}, \eqref{p1_c4}.
\end{flalign}
\end{subequations}

At this point, the problem $\textbf{P}_2$ can be viewed as a TSP problem, where the goal is to find the shortest possible routing. However, solving problem $\textbf{P}_2$ is so challenging because there are $N!$ possible routings to consider, making it extremely demanding on computational resources, particularly for large values of $N$ \cite{AoI_dp}. Existing methods that rely on expert-designed rules need much specialized knowledge to work well. To improve on this, we propose the LAURA method, utilizing LLM that has been pre-trained with extensive knowledge, which can solve the problem more efficiently.

\section{Proposed Algorithm}

LLMs, which are transformer-based architectures pre-trained on vast dataset, exhibit robust reasoning capabilities through self-attention and contextual pattern recognition \cite{LLM_Ability}. Their proven efficacy in solving complex algorithmic tasks (e.g., mathematics, programming) motivates our integration of an LLM with evolutionary mechanisms to address the UAV routing problem $\textbf{P}_2$ \cite{LLM_ref}. Our proposed LAURA framework synergizes LLM's predictive reasoning with genetic evolutionary principles. As depicted in Fig. \ref{scenario}, the algorithm iteratively refines solutions through three phases:

\subsubsection{\textbf{Initialization}} 

Initialization involves creating an initial population, which forms the basis for the subsequent evolutionary process. To avoid the need for expert knowledge, we utilize an LLM to generate this initial set. As illustrated in Fig. \ref{Prompt}, we provide the LLM with a prompt $\Psi(\boldsymbol{\mathcal{S}})$ that contains the task description and specific initialization hints. The code extracted from the LLM's response is then executed to produce the initial population $\boldsymbol{\varPhi}$.

Let $\text{LLM}(\cdot)$ denote the result generated by the LLM. Define $\boldsymbol{\phi}^k = [\boldsymbol{\varphi}^k, \varOmega(\boldsymbol{\varphi}^k)]$ as an individual consisting of two parts: $\boldsymbol{\varphi}^k = [\varphi_0, \varphi_1, \dots ,\varphi_N,\varphi_{N+1}]$, representing a UAV routing, and $\varOmega(\boldsymbol{\varphi}^k)$, indicating the maximum AoI for that routing. Therefore, the initial population $\boldsymbol{\varPhi}$ is given by
\begin{flalign}
& \boldsymbol{\varPhi}  =\text{LLM}(\Psi(\boldsymbol{\mathcal{S}})) = [\boldsymbol{\phi}^1, \cdots, \boldsymbol{\phi}^K]. \label{p_phi}
\end{flalign}

\subsubsection{\textbf{Evolution}}

After initialization, the population evolves through an iterative process aimed at refining the quality of each individual. Over $N_{\text{g}}$ iterations, the LLM generate $N_{\text{p}}$ individuals. This evolution comprises three stages: selection, generation, and verification.

\textit{Selection Stage:} To build a diverse parent individual set $\boldsymbol{\varTheta}_\ell$ from the current population $\boldsymbol{\varPhi}_\ell$ at iteration $\ell<N_{\text{g}}$ and
avoid premature convergence to suboptimal solutions caused by reduced population diversity,
we randomly select $N_{\text{p}}$ individuals to form $\boldsymbol{\varTheta}_\ell$. The selection is expressed as
\begin{flalign}
    \boldsymbol{\varTheta}_\ell = \Xi_K^{N_{\text{p}}} \circledast \boldsymbol{\varPhi}_\ell,
\end{flalign}
where $\Xi_K^{N_{\text{p}}}$ represents the selection of $N_{\text{p}}$ individuals from the $K$ ones, and $\circledast$ denotes the operation of doing $\Xi_K^{N_{\text{p}}}$ in $\boldsymbol{\varPhi}_\ell$.

 \textit{Generation Stage:} Using the parent set $\boldsymbol{\varTheta}_\ell$, we design an evolution prompt $\Psi(\boldsymbol{\mathcal{S}},\boldsymbol{\varTheta}_\ell)$ that includes task descriptions, parent individuals, and specific hints. This guides the LLM to generate offspring individuals via crossover operations, exploring the solution space efficiently. The offspring individual $\boldsymbol{\theta}_\ell$ is thus obtained by 
    \begin{flalign}
    \boldsymbol{\theta}_\ell  = \text{LLM}(\Psi(\boldsymbol{\mathcal{S}},\boldsymbol{\varTheta}_\ell)) = \left[ \boldsymbol{\vartheta}_\ell, \varOmega(\boldsymbol{\vartheta}_\ell) \right], \label{evo_theta}
    \end{flalign}
where $\boldsymbol{\vartheta}_\ell = \left[\vartheta_0^\ell, \vartheta_1^\ell, \dots, \vartheta_N^\ell,\vartheta_{N+1}^\ell \right]$.

\textit{Verification Stage:} This stage validates the correctness of LLM-generated individuals and reduces errors. By executing the generated hybrid evolutionary algorithm code, we obtain $\boldsymbol{\vartheta}_\ell$ and $\varOmega(\boldsymbol{\vartheta}_\ell)$, and then verify them against the following equations:
\begin{subequations}
\label{verifier}
\begin{empheq}[left=\empheqlbrace]{align}
& \vartheta_0^\ell = \vartheta_{N+1}^\ell = s_0, \label{verifier_1} \\
& \left\{\vartheta_1^\ell, \dots, \vartheta_N^\ell \right\} \cap \boldsymbol{\mathcal{S}} = \boldsymbol{\mathcal{S}}, \label{verifier_2} \\
& \varOmega(\boldsymbol{\vartheta}_\ell) = \sum\nolimits_{i=1}^N \tau_{\vartheta_i^\ell} + \sum\nolimits_{i=1}^{N} \tfrac{d_{\vartheta_i^\ell,\vartheta_{i+1}^\ell}}{v}, \label{verifier_3}
\end{empheq}
\end{subequations}
where \eqref{verifier_1} guarantees that the routing $\boldsymbol{\vartheta}_\ell$ starts and ends at the data center $s_0$, \eqref{verifier_2} confirms that the UAV visits each SN in $\boldsymbol{\mathcal{S}}$ just once, and \eqref{verifier_3} ensures the accuracy of calculating the maximum AoI. If all conditions in \eqref{verifier} are satisfied, the verification is successful, and $\boldsymbol{\theta}_\ell$ is added to the current population $\boldsymbol{\varPhi}_\ell$, i.e.,
\begin{flalign}
    \boldsymbol{\varPhi}_\ell^{\text{new}} = \boldsymbol{\varPhi}_\ell \cup \boldsymbol{\theta}_\ell.
\end{flalign}
Otherwise, an error is returned and the process repeats until $M$ attempts are completed.

\subsubsection{\textbf{Update}} 
The update process aims to form the next-generation population from the current one. To maintain population quality, the individual with the lowest fitness is removed. The fitness function is defined as
\begin{flalign}
   f(\boldsymbol{\theta})  = e^{- \varOmega(\boldsymbol{\vartheta})},
\end{flalign}
where a lower maximum AoI yields a higher fitness value, thus favoring solutions with lower AoI.

The population update then proceeds as follows:
\begin{flalign}
    \boldsymbol{\varPhi}_{\ell+1} & = \Xi_{K+1}^K \circledast \boldsymbol{\varPhi}_\ell^{\text{new}}, \label{l_1}  
\end{flalign}
where $\boldsymbol{\varPhi}_{l+1}$ is the updated population at iteration $(\ell+1)$, obtained by selecting the top $K$ individuals from the expanded population $\boldsymbol{\varPhi}_\ell^{\text{new}}$. 

Consequently, we obtain the optimized population $\boldsymbol{\varPhi}^\star$, which contains the top $K$ individuals. From this optimized population, the efficient solution $\{\boldsymbol{\mathcal{C}}^\star,A_{c_1}^\star\}$ is given by
\begin{flalign}
  \{ \boldsymbol{\mathcal{C}}^\star,A_{c_1}^\star\} = \left\{\mathop{\arg}\min\limits_{\boldsymbol{\phi} \in \boldsymbol{\varPhi}^\star} \,\varOmega(\boldsymbol{\varphi}) ,\min\limits_{\boldsymbol{\phi} \in \boldsymbol{\varPhi}^\star} \varOmega(\boldsymbol{\varphi}) \right\}.
\end{flalign}

\begin{algorithm}[t!]
  \caption{The framework of our proposed LAURA}
  \label{LAURA_algorithm}
  \begin{algorithmic}[1] 
    \REQUIRE The problem $\textbf{P}_2$, the population size $K$, the parent size $N_{\text{p}}$, the maximum attempts $M$, the new individuals size $N_{\text{g}}$ generated by the LLM, the prompt $\Psi$.    
    \STATE  Initialize the population $\boldsymbol{\varPhi}= \text{LLM}(\Psi(\boldsymbol{\mathcal{S}}))$ by \eqref{p_phi}.    
    \FOR{$\ell =1$ \TO $N_{\text{g}}$}
      \STATE  Select the parent set $\boldsymbol{\varTheta}_\ell$ from $\boldsymbol{\varPhi}_\ell$.
      \STATE  Design the evolution prompt $\Psi(\boldsymbol{\mathcal{S}},\boldsymbol{\varTheta}_\ell)$ based on the parent set. \label{Step: LLM}
      \STATE  Provide the designed prompt to the LLM and obtain the offspring individual $\boldsymbol{\theta}_\ell$ by \eqref{evo_theta}.
      \STATE  Verify $\boldsymbol{\theta}_\ell$ by \eqref{verifier}.
      \IF{verification successful}
        \STATE  Add $\boldsymbol{\theta}_\ell $ to the current population $\boldsymbol{\varPhi}_\ell$.
      \ELSE
        \STATE  Return the error and repeat from step \ref{Step: LLM} until $M$ attempts are completed.
      \ENDIF
      \STATE  Update next population $\boldsymbol{\varPhi}_{\ell+1}$ by \eqref{l_1}.
    \ENDFOR

    \STATE Obtain the efficient result: $\{\boldsymbol{\mathcal{C}}^\star, A_{c_1}^\star \}$.
    \ENSURE The solution: $\{\boldsymbol{\mathcal{C}}^\star, A_{c_1}^\star \}$.
  \end{algorithmic}
\end{algorithm}
\vspace{-0.15cm}

By utilizing an LLM directly as crossover operators, LAURA efficiently explores the solution space, as detailed in Algorithm \ref{LAURA_algorithm}. Moreover, considering the diverse and unknown structures of LLMs, we denote the complexity of an LLM as $N_{\textrm{LLM}}$. The computational complexity of LAURA is approximately given by $\mathcal{O}(M\times N_{\text{g}}\times N_{\textrm{LLM}})$. While LAURA may not always be the most time-efficient approach, it significantly reduces computational resource demands compared to dynamic programming and eliminates the need for expert knowledge required by traditional heuristic methods.


\section{Numerical Results}
We provide numerical results to validate LAURA's performance in the UAV-assisted WSN. In the simulation, the SNs are randomly and uniformly distributed within a circle area of 3000 m radius and the data center $s_0$ is located at the origin $\left(0,0\right)$. The UAV operates at a height of 30 m with a speed $v$ of 10 $\text{m/s}$. Unless otherwise stated, we set the key parameters as: UAV's transmit power $P_0 = 0.3 \,\text{W}$, data size $D_i = 0.5 \,\text{Mb}$, system bandwidth $W = 1 \,\text{MHz}$, channel gain at 1 meter $g_i=-50\text{dB}$, and noise power $\sigma ^2 = -110\, \text{dBm}$ \cite{AoI_dp}. For LAURA, we set a population size of $K = 10$, select $N_{\textrm{p}} = 5$ parent individuals, repeat maximum $M=3$ attempts and generate $N_{\text{g}} = 10$ new individuals using an LLM, i.e., Deeepseek v3-671B, QwQ-32B and Deepseek R1-32B (a distilled version of Deepseek R1).

For comparison, we simulate LEDMA \cite{ISAC_LLM}, Genetic, Greedy, and Random algorithms as benchmarks. In particular, the LEDMA algorithm generates UAV routing solutions using an LLM directly and does not provide feedback on verification failures for secondary attempts. We evaluate performance across three scenarios with $N=\{20, 30, 40\}$ SNs, respectively. In each scenario, ten random cases are generated and solved in five independent runs using different algorithms.

\begin{figure}[t!]
\centering
        \includegraphics[height=0.25\textheight, keepaspectratio=true]{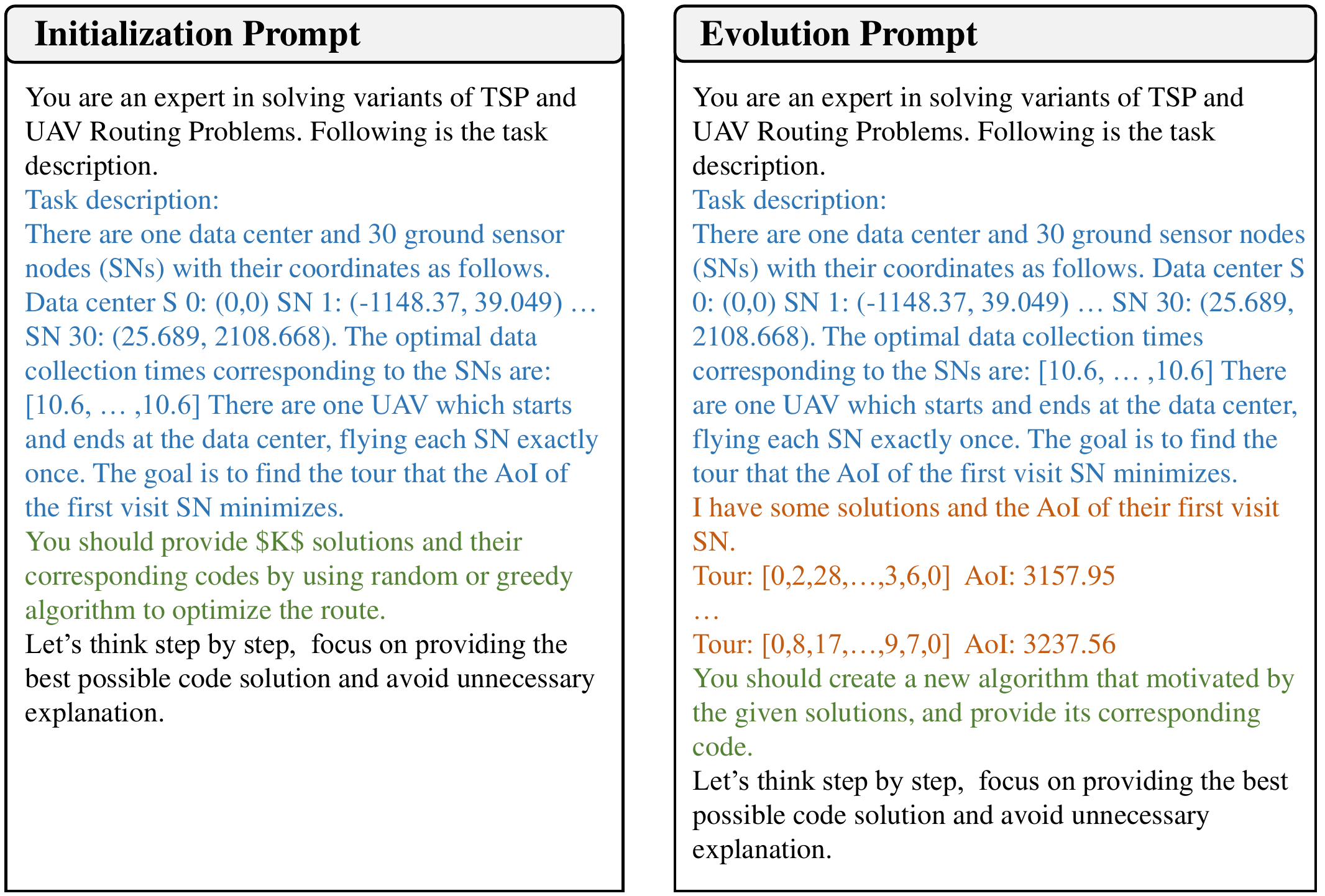}
        \caption{Instance Prompts used in the initialization and evolution of LAURA: \textcolor[RGB]{55,123,185}{a Task description}, \textcolor[RGB]{208,123,65}{Parent solutions}, and \textcolor[RGB]{137,168,115}{Prompt-specific hints}.}
        \label{Prompt}
        \vspace{-0.5cm}
\end{figure}

\begin{figure*}[t!]
  \centering
  \subfigure[The average of AoI.]{%
      \includegraphics[height=0.18\textheight, keepaspectratio=true]{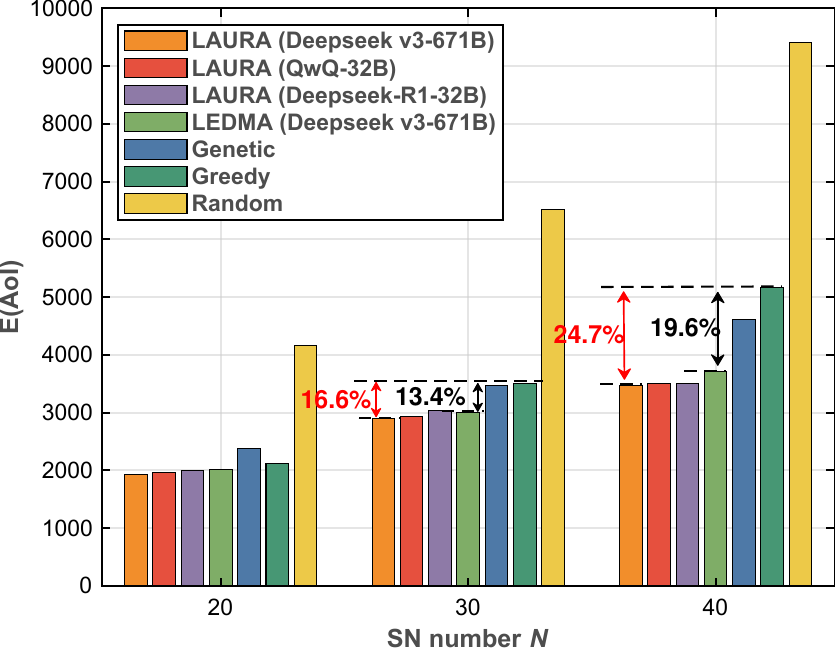}%
      \label{Average_AoI}%
  }
  \hspace{0.5cm} 
  \subfigure[The variance of AoI.]{%
      \includegraphics[height=0.186\textheight, keepaspectratio=true]{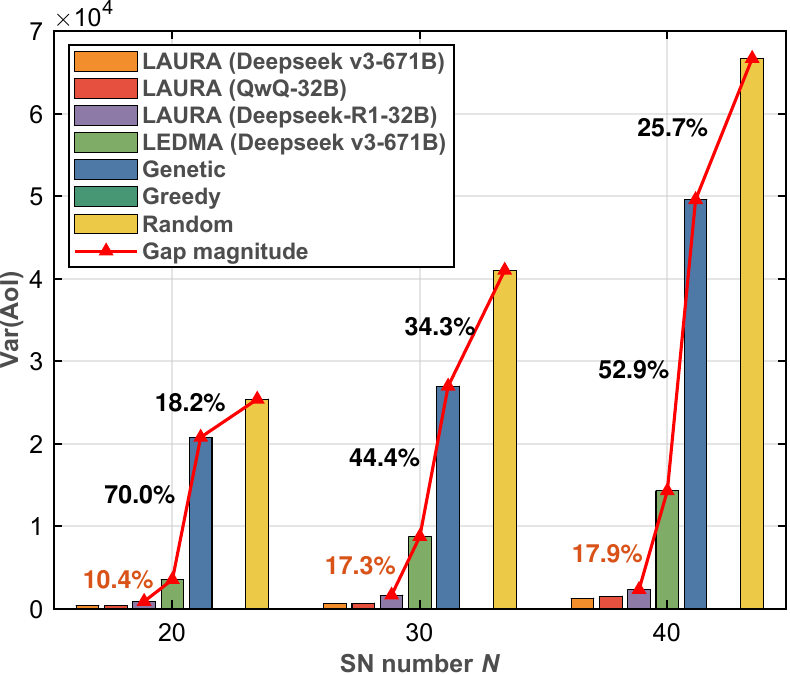}%
      \label{Variance_AoI}
  }
  \hspace{0.5cm} 
  \subfigure[The hallucination rate $\boldsymbol{\epsilon}$.]{%
      \includegraphics[height=0.18\textheight, keepaspectratio=true]{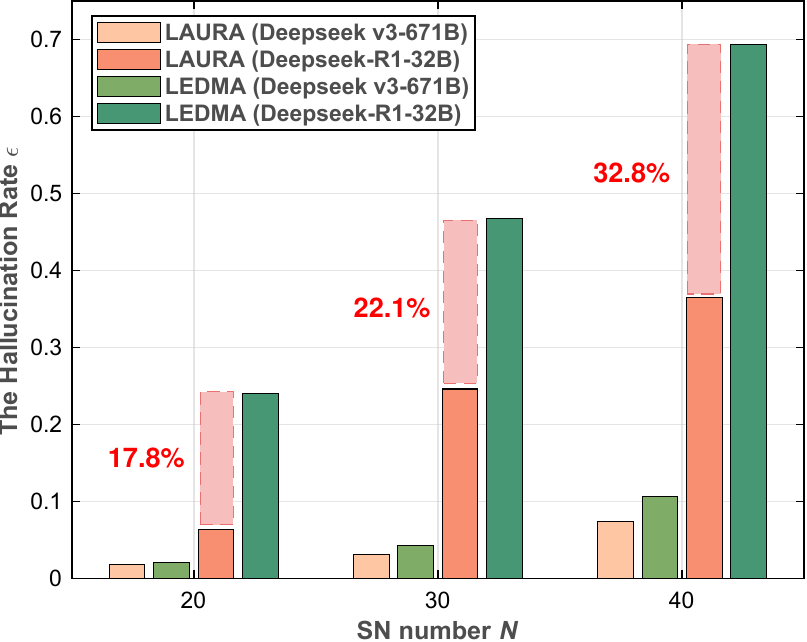}%
      \label{Error_rate}%
  }
  \caption{The mean and variance comparison of $A_{c_1}$ for LAURA, LEDMA \cite{ISAC_LLM}, Genetic, Greedy and Random algorithms, and the evaluation of the hallucination rate $\boldsymbol{\epsilon}$ for LEDMA and LAURA, across $N=20$, $30$, and $40$ SNs, respectively.}
  \label{Result_1}
\end{figure*}

Fig. \ref{Average_AoI} shows the mean of $A_{c_1}$ for LAURA, LEDMA, Genetic, Greedy and Random algorithms across $N=20$, $30$, and $40$ SNs, respectively. It is seen that LAURA achieves a lower maximum AoI than other algorithms. Specifically, LAURA (Deepseek-v3) reduces the mean AoI by 16.6\% and 24.7\% compared to the Greedy algorithm at $N=30$ and $N=40$, respectively. The performance gaps between them become large with the increment of $N$, due to the integration 
of LLM-guided exploration and evolutionary mechanisms for preserving high-quality solutions.
Meanwhile, LEDMA (Deepseek-v3) achieves reductions of 13.4\% and 19.6\%, demonstrating LAURA’s superiority in WSNs. 
Moreover, the Genetic algorithm underperforms due to reliance on manual expert knowledge and even performs worse than the Greedy one when $N=20$, revealing the limitations of traditional heuristic methods.

To evaluate algorithm stability, we also plot the variance of $A_{c_1}$ across $N=\{20, 30, 40\}$ SNs for different algorithms in Fig. \ref{Variance_AoI}. It is observed that our LAURA can maintain a lower variance of AoI compared to other algorithms, even at larger $N$, indicating its stability. However, LAURA with DeepSeek-R1-32B shows slightly poorer performance than LAURA with other LLMs, partly because its smaller parameter count impacts model effectiveness. Moreover, the performance gap between different algorithms becomes larger as $N$ increases. Particularly, the variance gap between LAURA (Deepseek-R1-32B) and LEDMA (Deepseek v3) reaches 10.4\%, 17.3\% and 17.9\% at $N=20, 30$ and 40, respectively.

Fig. \ref{Error_rate} compares the model hallucination rate $\boldsymbol{\epsilon}$ of LAURA and LEDMA. Specifically, LLMs in LEDMA hallucinate when the outputs fail to meet the criteria in \eqref{verifier_1} and \eqref{verifier_2}, whereas in LAURA, hallucinations occur when the outputs do not satisfy all conditions in \eqref{verifier}. We can see that $\boldsymbol{\epsilon}$ of LAURA is lower than that of LEDMA. Moreover, the gaps in $\boldsymbol{\epsilon}$ between LAURA (Deepseek-R1-32B) and LEDMA (Deepseek-R1-32B) are 17.8\%, 22.1\% and 32.8\% at $N=20, 30$ and 40, respectively.

\begin{figure*}[t!]
  \centering
  \subfigure[UAV routing by LAURA.]{%
      \includegraphics[height=0.14\textheight]{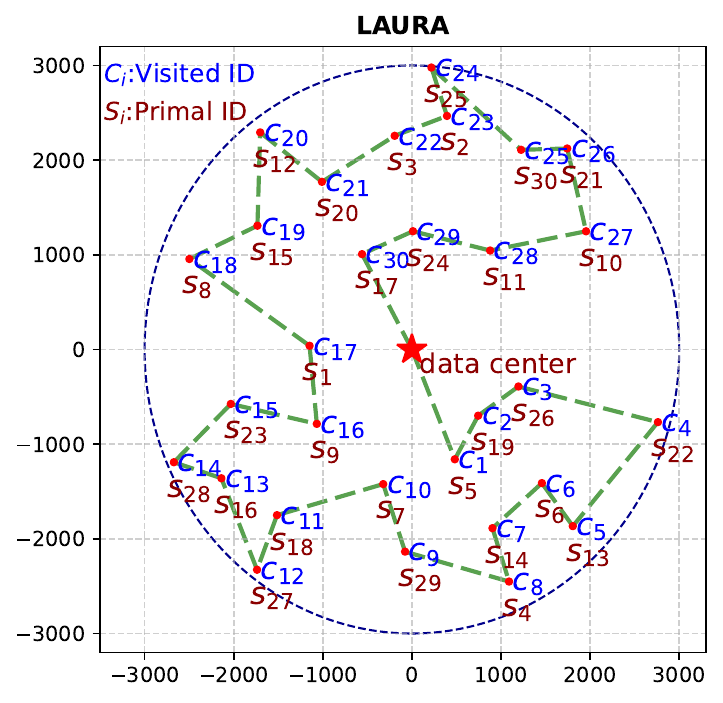}%
      \label{N_30_1_(LAURA)} 
  }
  \subfigure[UAV routing by LEDMA.]{%
      \includegraphics[height=0.14\textheight]{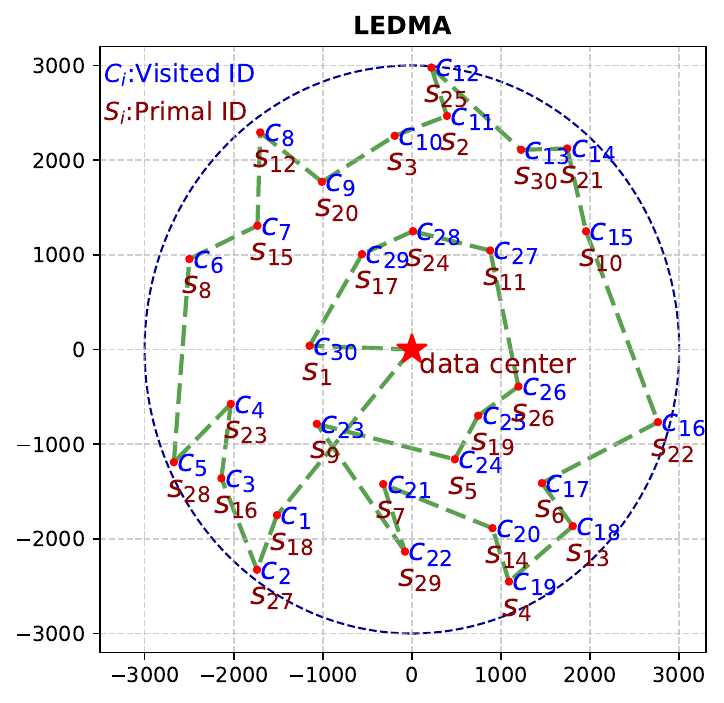}%
      \label{N_30_1_(LEDMA)}%
  }
  \subfigure[UAV routing by Greedy.]{%
      \includegraphics[height=0.139\textheight]{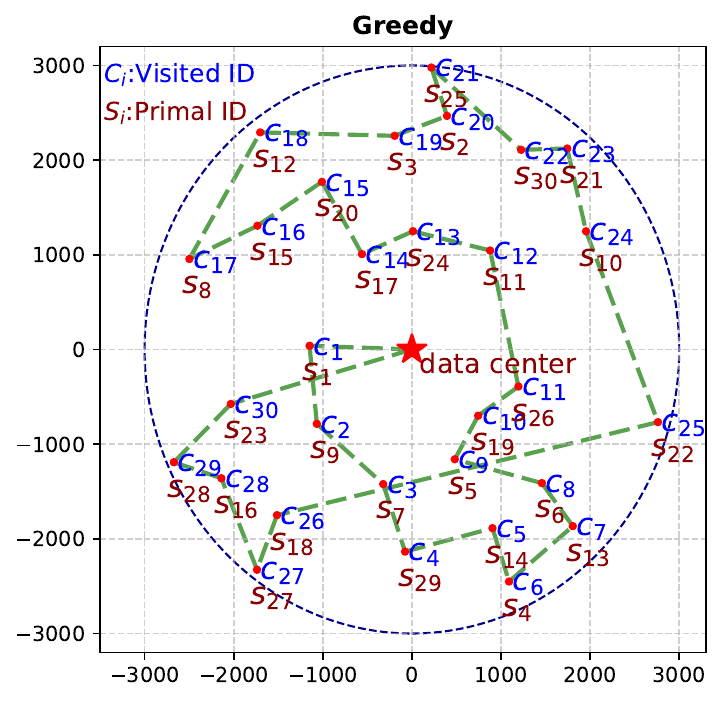}%
      \label{N_30_1_(Greedy)}%
  }
  \subfigure[UAV routing by Genetic.]{%
      \includegraphics[height=0.139\textheight]{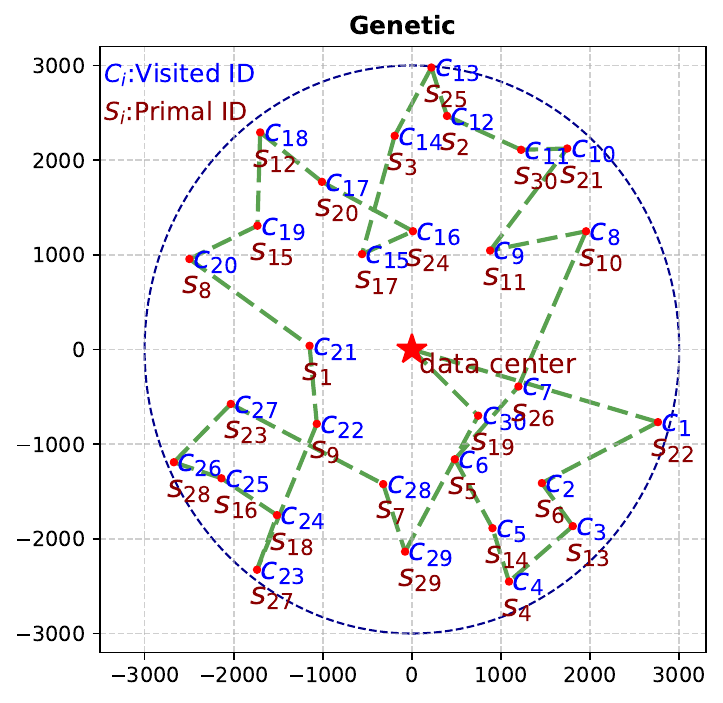}%
      \label{N_30_1_(Genetic)}%
  }
  \subfigure[UAV routing by Random.]{%
      \includegraphics[height=0.139\textheight]{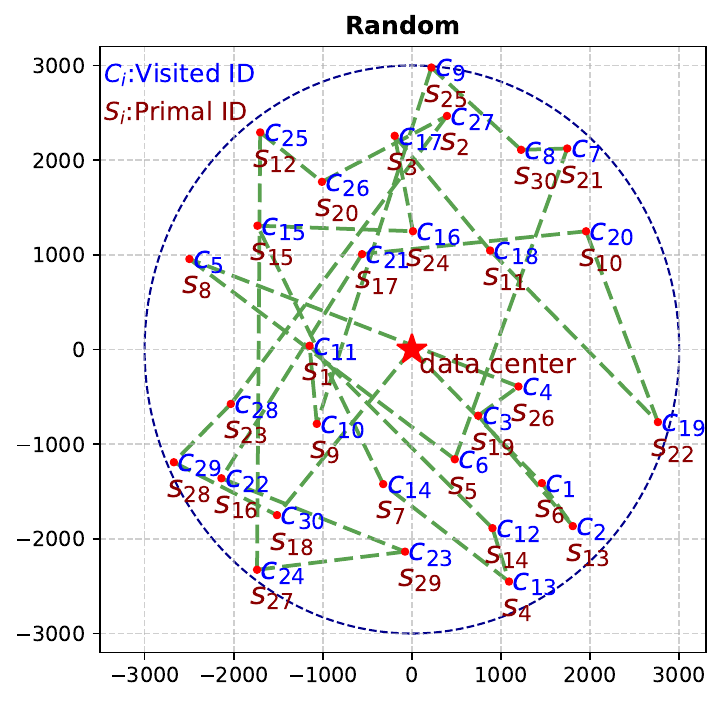}%
      \label{N_30_1_(Random)}%
  }
  \caption{The comparison of UAV routing results obtained by LAURA, LEDMA \cite{ISAC_LLM}, Genetic, Greedy, and Random algorithms with $N=30$ SNs.}
  \label{routing}
  \vspace{-0.3cm}
\end{figure*}

Fig. \ref{routing} compares the UAV routing performance of the proposed LAURA algorithm against several baselines for $N=30$. We can see that LAURA optimizes the UAV flight routing more effectively, avoiding overlapping areas in Fig. \ref{N_30_1_(LAURA)}. While the Genetic algorithm converges to a local optimum, yielding a suboptimal sequence. LEDMA, implemented with Deepseek v3, achieves slight improvements over the Genetic algorithm but still underperforms compared to LAURA. The Greedy approach, based on nearest selection, leads to inefficient ordering, while the Random strategy performs the worst due to its lack of coordination.

\section{Conclusion}
This study has investigated a UAV-assisted WSN system aimed at minimizing the maximum AoI across ground SNs. We have formulated the optimization problem and proposed LAURA, a novel method that combines evolutionary mechanisms with LLM-guided crossover operations to improve solution space exploration. Numerical results have confirmed that LAURA outperforms benchmark algorithms.



\bibliographystyle{IEEEtran}
\bibliography{IEEEabrv.bib,REF}

\end{document}